\newcommand{\lineNumberPatch}{}
\newtheorem{theorem}{Theorem}
\newtheorem{corollary}{Corollary}
\newtheorem{lemma}{Lemma}
\newcommand{\margincomment}[1]%
{{\marginpar{{\footnotesize\begin{minipage}{\marginparwidth}%
          \begin{flushleft}%
            {#1}%
          \end{flushleft}%
        \end{minipage}%
      }}}\ignorespaces}
\newcommand{\nealrevisedtext}[1]{{\color{darkblue}#1}}
\newcommand{\ignore}[1]{}
\colorlet{darkgreen}{green!45!black}
\colorlet{darkblue}{blue!45!black}
\newcommand*{\etal}{%
    \@ifnextchar{.}%
       {et al}%
      {et al.\@\xspace}%
}
\newcommand{\etals}{et al.'s\xspace}
\newcommand{\mycase}[1]{
  \medskip\par\noindent\underline{Case #1}: 
  \@ifnextchar\par\@gobble\relax
}
\newcommand{\braced}[1]{{ \left\{ #1 \right\} }}
\newcommand{\angled}[1]{{ \left\langle #1 \right\rangle }}
\renewcommand{\cal}{\mathcal}
\newcommand{\Queries}{{\cal S}}
\newcommand{\Keys}{{\cal K}}
\newcommand{\depth}{\ensuremath{\operatorname{\sf depth}}}
\newcommand{\cost}{{\textsf{cost}}}
\newcommand{\optcost}{{\textsf{opt}}} 
\newcommand{\query}{{q}}
\newcommand{\compnode}[2]{\ensuremath{\angled{\query{{}\mathbin{#1}{}}{#2}}}}
\newcommand{\leaves}[1]{\ensuremath{{\sf leaves}(#1)}}
\newcommand{\weight}[1]{{w_{#1}}}  
\newcommand{\nodeweight}[1]{{w_{#1}}}
\newcommand{\treeweight}[1]{{w(#1)}}
\newcommand{\sideweight}[1]{{\textsf{sw}({#1})}} 
\newcommand{\sw}[1]{\sideweight{#1}}
\newcommand{\key}[1]{k_{#1}}
\newcommand{\querysubset}[3]{S(#1, #2, #3)}
\newcommand{\treeLabelSize}{\scriptsize}
\tikzset{
  edgeYes/.style = {left, node contents={\scriptsize \emph{y}}, pos=0.3}, 
  edgeNo/.style = {right, node contents={\scriptsize \emph{n}}, pos=0.26}, 
  edgeLt/.style = {left, node contents={\scriptsize \emph{$<$}}, pos=0.35},
  edgeEq/.style = {left, node contents={\scriptsize \emph{$=$}}, pos=0.37, inner sep=0},
  edgeGt/.style = {right, node contents={\scriptsize \emph{$>$}}, pos=0.35},
  weight/.style = {label={270:\treeLabelSize #1}}, 
  every path/.style = {draw=black!40}, 
}
\newenvironment{TREE}{
  \forest
    for tree = { draw=black!40, s sep=0.6em, l sep=1.3em }, 
    where n children = 0 {
      rectangle,                
      minimum width=14.5pt,
      inner sep=2pt, 
      text height=7.5pt,
      text depth=1.5pt,
    }{
      rounded rectangle,        
      sibling distance=2em,
      text height=7pt,
      text depth=2.2pt,
    },
    where n={2}{where n'={1}{edge label={node[edgeNo] }}{}}{}, 
    where n={1}{where n'={2}{edge label={node[edgeYes] }}{}}{}, 
  }{%
  \endforest
}
\newenvironment{ThreeWayTree}{
  \forest
    for tree = { draw=black!40, s sep=1.5em, l sep=1.3em }, 
    where n children = 0 {
      rectangle,                
      minimum width=14.5pt,
      inner sep=2pt, 
      text height=7.5pt,
      text depth=1.5pt,
    }{
      rounded rectangle,        
      sibling distance=2em,
      text height=7pt,
      text depth=2.2pt,
    },
    where n={3}{where n'={1}{edge label={node[edgeGt] }}{}}{},
    where n={2}{where n'={2}{edge label={node[edgeEq] }}{}}{},
    where n={1}{where n'={3}{edge label={node[edgeLt] }}{}}{},
  }{%
  \endforest
}
\newcommand{\subtreecolor}{black!4}
\newcommand{\interiorcolor}{green!4}
\newcommand{\edgecolor}{black!66}
\newcommand{\subtreeparams}[2]{
  \forestset{
    subtree1/.style={
      anchor=north, 
      child anchor=north, 
      outer sep=0pt,
      yshift=10pt,
      %
      %
      rounded corners=0.5em, 
      shape=semicircle,
      inner sep=2pt,
      text depth=4pt,
      xscale=#1,
      content format={
        \noexpand\scalebox{#2}[1]{\forestoption{content}}
      },
    },
  }
  \forestset{
    subtree/.style={
      subtree1,
      draw={\edgecolor, thin, dotted}, 
      edge={\edgecolor, thin}, 
      fill=\subtreecolor, 
    },
  }
  \forestset{
    dimtree/.style={
      subtree1,
      draw={gray, thin, dotted},
      edge={gray, thin, dotted}, 
      fill=gray!3,
      text=black!50, 
    },
  }
  \forestset{
    greentree/.style={
      subtree1,
      draw={\edgecolor, thin}, 
      edge={\edgecolor, thin}, 
      fill=\interiorcolor, 
    },
  }
}
\newcommand{\defaultsubtreeparams}{
  \subtreeparams{1}{1}
}
\title{A Simple Algorithm for Optimal Search Trees \\ with Two-Way Comparisons
\thanks{
  To appear (without Appendix~\ref{sec: root stronger}) in ACM Transactions on Algorithms~\cite
  {chrobak_etal_simple_bcst_algorithm_2021}. 
  See~\cite[Theorem 1]{chrobak_etal_isaac_2015} for a conference version of the result in this paper
  (but the algorithm in that version is more complicated, and its analysis is less informative).
  See~\cite{DBLP:journals/corr/ChrobakGMY15,chrobak_etal_huangs_algorithm_2018,CHROBAK2021104707}
  for improved versions of other results in~\cite{chrobak_etal_isaac_2015}.
}
}
\author{
  Marek Chrobak\thanks
  {
    University of California at Riverside. 
    Research supported by NSF grants CCF-1217314 and CCF-1536026.
  }
  \and
  Mordecai Golin\thanks
  {
    Hong Kong University of Science and Technology.
    Research funded by HKUST/RGC grant FSGRF14EG28 and RGC CERG Grants 16208415 and 16213318.
  }
  \and
  J. Ian Munro\thanks
  {
    University of Waterloo.
    Research funded by NSERC and the Canada Research Chairs Programme.
  }
  \and
  Neal E. Young\thanks
  {
    University of California at Riverside.
    Research supported by NSF grant IIS-1619463.
  }
}
\begin{document}


\maketitle

\begin{abstract} 
We present a simple $O(n^4)$-time algorithm for computing optimal search trees with two-way comparisons.
The only previous solution to this problem, by Anderson \etal, has the same running time, but
is significantly more complicated and
is restricted to the variant where only successful queries are allowed.
Our algorithm  extends directly to solve the standard full variant of the problem, which also allows 
unsuccessful queries and for which no polynomial-time algorithm was previously known. The
correctness proof of our algorithm relies on a new structural theorem for two-way-comparison search trees.
\end{abstract}


\thispagestyle{empty}

\setcounter{page}{1}


\section{Introduction}\label{sec: introduction}



Search trees are among the most fundamental data structures for storing and accessing data.
\emph{Static} search trees
are used in applications where the set of possible queries and their frequencies are known in advance,
in which case a single tree can be precomputed and used throughout the application to handle lookup queries.
Such a tree is \emph{optimal} if it minimizes the expected search cost.
The problem of computing optimal search trees has been studied extensively in various forms since the 1960s.
Perhaps the most famous example is Knuth's $O(n^2)$-time algorithm~\cite{Knuth1971},
which is widely considered one of the gems of algorithmics
and has inspired the discovery of general techniques
for speeding up dynamic-programming algorithms~\cite[\S7.1]{Bein2013}.

But most optimal-search-tree results, including Knuth's, consider only trees based on three-way
comparisons~\cite{Knuth1971}.
In contrast, standard high-level programming languages implement two-way comparisons (e.g., ${=}, {\le}, {<}$).
As pointed out by Knuth himself in the second edition of \emph{``The Art of Computer Programming''},
it is desirable to understand search trees that use only two-way comparisons~\cite[\S6.2.2 ex.~33]{Knuth1998}.
Yet they are still not well understood: there is only one published proof that optimal two-way-comparison trees
can be computed in polynomial time,
given decades after Knuth's result,
in a breakthrough by Anderson~\etal~\cite{Anderson2002},
who gave an $O(n^4)$-time algorithm for the \emph{successful-queries} variant,
in which only searches for keys stored in the tree are supported.  
Their work was motivated by an application to efficient message dispatching
in object-oriented programming languages, addressed earlier
by Chambers and Chen~\cite{Chambers:1999:EMP:320384.320407} (see also~\cite{nagaraj_optimal_1997})
who provided an $O(n^2)$-time heuristic (non-optimal) algorithm for constructing
two-way-comparison trees. Independently, Andersson~\cite{andersson1991note} had earlier looked at speeding up search trees by replacing three-way comparisons with two-way ones.


\paragraph{Difficulties introduced by two-way comparisons.}  
Given a query $\query$,
the search for $\query$ in a search tree starts at the root and traces a path to a leaf,
at each node comparing $\query$ to the node's key,
then following the edge corresponding to the outcome.
A \emph{three-way} comparison has three possible outcomes:
the query is less than, equal to, or greater than the node's key. 
In any three-way-comparison tree,
the set of queries reaching any internal node $N$ is an open interval between some two keys.
Each subtree solves a subproblem defined by some such inter-key interval.
This leads to a simple dynamic program with $O(n^2)$ subproblems
and to a simple $O(n^3)$-time algorithm,
the derivation of which is a standard exercise~\cite
[\S15.5]{Cormen:2009:IAT:1614191},~\cite[Ex.\,6.20]{dasgupta_algorithms_2006}.
Knuth's celebrated result improves the running time to $O(n^2)$.

In contrast (as noted e.g.~in~\cite{Anderson2002}),
algorithms for \emph{two-way}-comparison search trees face the following obstacle.
After a few equality tests, the subproblem that remains
is defined by some inter-key interval with \emph{holes} --- each hole corresponding to an earlier equality test.
(For example, in tree (b) of Figure~\ref{fig: three-way vs two-way},
the set of queries reaching node $\compnode < 5$ is $[3,\infty) - \{7\}$.)
The resulting dynamic-programming formulation can have $\Theta(2^{n})$ subproblems, as
the query sets that can arise can contain any subset of the keys.


\begin{figure}[t]
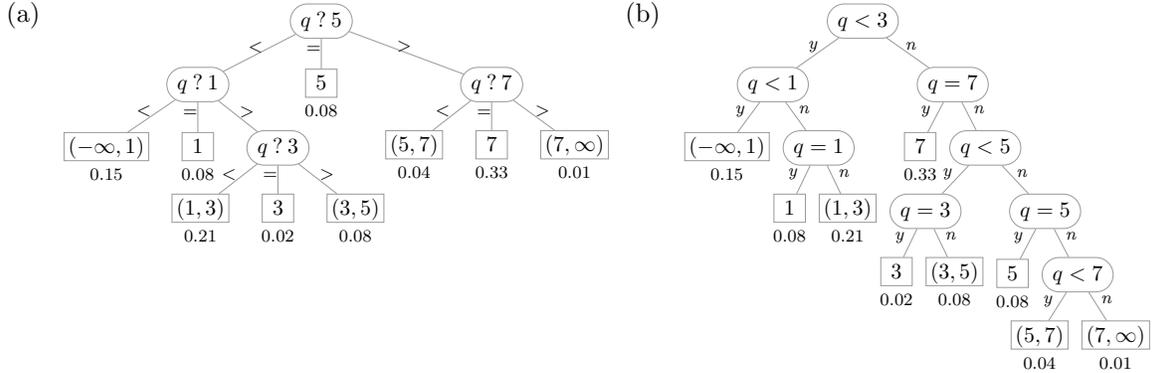

 \centering
  \newcommand{\CMP}{\mathbin{?}}
  \centering
  \small
  \noindent
  (a)
  \adjustbox{valign=t,scale = 0.82}{
    \begin{ThreeWayTree}
      [{$q \CMP 5$}, s sep=3.5em,
        [{$q \CMP 1$}, 
          [{$(-\infty,1)$}, weight={0.15}]
          [1, calign with current, weight={0.08}]
          [{$q \CMP 3$}
            [{$(1, 3)$}, weight={0.21}]
            [3, calign with current, weight={0.02}]
            [{$(3,5)$}, weight={0.08}]
            ]
          ]
        [5, calign with current, weight={0.08}]
        [{$q \CMP 7$}
          [{$(5,7)$}, weight={0.04}]
          [7, calign with current, weight={0.33}]
          [{$(7,\infty)$}, weight={0.01}]
          ]
        ]
    \end{ThreeWayTree}
  }
  (b)
  \adjustbox{valign=t,scale = 0.82}{
    \begin{TREE}
      [{$q<3$}
        [{$q<1$}
          [{$(-\infty,1)$}, weight={0.15}]
          [{$q=1$}, 
            [1, weight={0.08}]
            [{$(1,3)$}, weight={0.21}]
            ]
          ]
        [{$q=7$}
          [7, weight={0.33}]
          [{$q<5$}
            [{$q=3$}, 
              [3, weight={0.02}]
              [{$(3,5)$}, weight={0.08}]
              ]
            [{$q=5$}, 
              [5, weight={0.08}]
              [{$q<7$}, 
                [{$(5,7)$}, weight={0.04}]
                [{$(7,\infty)$}, weight={0.01}]
                ]
              ]
            ]
          ]
        ]
    \end{TREE}
  } 
  \caption{
  	(a) A \emph{three-way-comparison} search tree and
 	(b) a \emph{two-way-comparison} tree.
    Both trees have keys $\{1,3,5,7\}$. The number below each leaf represents its frequency.
	The cost of tree~(a) is 2.23, the cost of tree~(b) is $2.8$.
  }\label{fig: three-way vs two-way}
\end{figure}


\paragraph{The most-likely-key property.} 
To achieve their polynomial-time algorithm for two-way comparison trees, 
Anderson \etal circumvent this obstacle in two steps.
First, using an elegant \emph{side-weights} argument,
they show that every instance has an optimal tree $T$ with what we call the \emph{most-likely-key property}:
\begin{description}
\item(MLK) \emph{In each equality-test node $N$ in $T$,
    the test is to some key of maximum weight
    among keys reaching $N$.
  }
\end{description}
Their proof is for the successful-queries variant
but extends directly to the unrestricted variant.
The result implies that it suffices to consider only trees with the MLK property.

If all key weights are distinct, then, at any node $N$ in such a tree $T$, the set of
queries that reach $N$ is an inter-key interval
with the $g$ heaviest keys removed, for some $g \in\{0,1,\ldots,n\}$.
There are $O(n^3)$ such query sets (intervals with such keys removed)
so this yields a dynamic program with $O(n^3)$ subproblems,
and a simple $O(n^4)$-time algorithm for instances with distinct weights.

When the weights are not distinct, however, the MLK property gives no guidance about how to break ties.
Even in trees with the MLK property, exponentially many distinct query sets can arise,
so the dynamic program still has exponential size.
To work around this counter-intuitive obstacle, Anderson \etal
then resort to a lengthy argument that establishes
``thresholds on the frequency of an item that can occur as an equality comparison at the root of an optimal tree.''
This significantly complicates their analysis and their algorithm.


\paragraph{New results.}
Our first result is a structural theorem (Theorem~\ref{thm: root RMLK}), which
implies that
there is an optimal tree with what we call the \emph{refined MLK property}:
\begin{description}
\item(RMLK)
  \emph{In each equality-test node $N$ in $T$,
    the test is to the key $k_b$ of maximum weight among keys
    reaching $N$, breaking ties by maximizing the index $b$.}
\end{description}
Breaking ties by the index $b$ (above) is for convenience, in fact any tie-breaking rule will work. 

The precise statement of Theorem~\ref{thm: root RMLK} is somewhat delicate.
For example, it is \emph{not} the case that any tree with an equality test at the root
can be converted into an equally good tree with an equality test to any given maximum-weight key.
(See the remarks following the proof of the theorem.)
But, with the correct theorem statement in hand, the proof is straightforward,
by augmenting the side-weights argument of Anderson \etal
with a perturbation argument.

The theorem implies that, whether or not the weights are distinct,
it suffices to consider trees with the RMLK property.
Restricting to such trees yields a simple dynamic program with $O(n^3)$ subproblems,
each solvable in $O(n)$ time,
yielding an $O(n^4)$-time algorithm (Corollary~\ref{corollary: main}).
The algorithm is essentially the algorithm for distinct-weights instances, modified to break ties arbitrarily.
Our formulation is simple and easy to implement ---
Appendix~\ref{sec: code} gives a twenty-two-line implementation in Python.
This is the first polynomial-time algorithm to be proven correct for the unrestricted variant.


\begin{figure}[t]
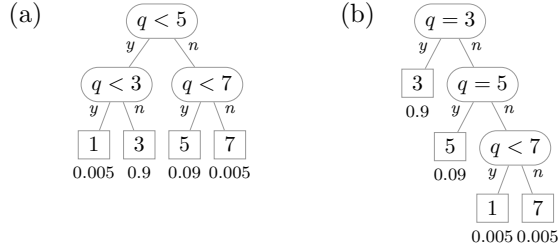

 \centering
  \newcommand{\CMP}{\mathbin{?}}
  \centering
  \small
  \noindent
  (a)
\adjustbox{valign=t,scale = 0.82}{
    \begin{TREE}
      [{$q<5$}
        [{$q<3$}
          [1, weight={0.005}]
          [3, weight={0.9}]
         ]
        [{$q<7$}
          [5, weight={0.09}]
          [7, weight={0.005}]
           ]
        ]
    \end{TREE}
  } 
\hspace*{.3in}
  (b)
  \adjustbox{valign=t,scale = 0.82}{
    \begin{TREE}
      [{$q=3$}
        [3, weight={0.9}]
        [{$q=5$}
          [5, weight={0.09}]
          [{$q<7$}
               [1, weight={0.005}]
               [7, weight={0.005}]
            ]
          ]
        ]
    \end{TREE}
  } 
  \caption{
Optimal {two-way-comparison} trees for the \emph{successful-queries} variant
  	(a) using only inequalities, and
 	(b) when `$=$'s are also permitted.
    Both trees have key set  $\{1,3,5,7\}$. The number below each leaf represents its frequency.
	The cost of tree~(a) is 2, the cost of tree~(b) is $1.11$.
}\label{fig: two-way successful queries}
\end{figure}



\paragraph{Other related work.}
In 1994, Spuler~\cite{Spuler1994Paper,Spuler1994Thesis} suggested an algorithm for computing
optimal two-way comparison trees, based on a result by Huang~\etal~\cite{StephenHuang1984} 
for a different type of search tree. But Huang \etals related result,
and the recurrences underlying Spuler's proposed algorithm,
are demonstrably wrong~\cite{chrobak_etal_huangs_algorithm_2018}.
Independently, Spuler also conjectured that optimal trees have the MLK property,
and based on that conjecture proposed an additional algorithm, without proof of correctness,
running in time $O(n^5)$.
As discussed earlier, in 2002 Anderson \etal proved the conjecture for the successful-queries variant
and obtained an $O(n^4)$-time algorithm for it.
  To our knowledge, this is the only previously published proof of correctness
  for any polynomial-time  algorithm for optimal two-way-comparison search trees,
for any variant with equality tests and inequality comparisons.

\smallskip

The technical obstacle introduced by non-distinct weights
also arises in finding optimal so-called \emph{binary split trees}:
for distinct-weight instances, an optimal binary split tree can be found in $O(n^4)$ time,
but for arbitrary instances the best bound known is $O(n^5)$~\cite{Hester1986}.
For \emph{generalized binary split trees} (see~\cite{StephenHuang1984,chrobak_etal_huangs_algorithm_2018})
no correct polynomial-time algorithm is yet known.

\smallskip

There are various other models for comparison-based search trees in the literature.
As one example, one can consider only inequality comparisons. In trees with
three-way comparisons that would correspond to giving every key weight zero
(so only non-key queries have positive probability).
As noted by Knuth,  this model is 
equivalent to alphabetic coding, solved by Gilbert and Moore in $O(n^3)$ time~\cite{Gilbert1959},  
later improved to $O(n \log n)$ time by Hu and Tucker~\cite{Hu1971} and,
independently, by Garsia and Wachs~\cite{Garsia1977}. 
As noted by Anderson \etal this is also equivalent to 
finding the optimal two-way-comparison tree
in the successful-queries case that only uses ``$<$'' comparisons.                              
For more background information on different models and algorithms (including approximations)
for search trees we refer the reader to the 1997 tutorial by Nagaraj~\cite{nagaraj_optimal_1997}.
     



\paragraph{Formal problem statement.} 
Let $k_1, k_2, \ldots, k_n$ denote the $n$ given keys, in increasing order.
Let $\beta_b\ge 0$, for $b = 1,2,\ldots,n$,
be the weight of key $k_b$ and $\alpha_a\ge 0$, for $a = 0,1,\ldots,n$, be
the weight of interval $(k_a, k_{a+1})$, where for convenience $k_0$ and $k_{n+1}$ (which are not keys)
represent $-\infty$ and $\infty$.
In standard applications the weights represent query frequencies or probabilities.
Here each weight can be an arbitrary non-negative number.

As illustrated in Figure~\ref{fig: three-way vs two-way}(b),
a \emph{two-way-comparison search tree} for the instance is a rooted binary tree $T$,
where each non-leaf node $N$ represents a comparison ($=$ or $<$)\footnote
{For simplicity of presentation, we don't allow the $>$ (or equivalently $\le$) comparison.
  See the discussion below.}
to some key $k_b$. $T$ has $2n+1$ leaves, $n$ of which are
associated one-to-one with the keys $k_b$ and the remaining $n+1$ with the intervals $(k_a, k_{a+1})$.
Given any value $\query$,  the search for $\query$ in $T$ must correctly
\emph{identify} $\query$, in the following sense. The search starts at the root and 
proceeds down the tree, following the edges corresponding to the outcomes of the comparisons,
and it terminates either at the leaf associated with a key $k_b$, if $\query = k_b$,
or at the leaf associated with the interval $(k_a, k_{a+1})$ containing $\query$, if $\query$ is not a key.

For each node $N$ in $T$, let $T_N$ denote the subtree rooted at $N$.
The \emph{weight of subtree $T_N$}, denoted $\nodeweight{N}$, is the total weight of the
  keys $k_b$ and intervals $(k_a, k_{a+1})$ in the set of queries reaching $N$
  (these are the keys and intervals associated with the leaves of $T_N$).

The \emph{cost} of $T$ is the weighted depth of its leaves, that is:
\begin{equation}
  \cost (T) \;=\; \sum_{ \ell\in\leaves{T} } \depth_T(\ell) \times\, \nodeweight{\ell}
  			\;=\; \sum_{ N\in T-\leaves{T} } \nodeweight{N},
          \label{eq: cost} 
\end{equation}
where $\leaves{T}$ is the set of leaves in $T$ and
$\depth_T(\ell)$ is the number of edges on the path from the root to $\ell$ in $T$. 

A search tree $T$ is called \emph{optimal} for a given instance if it has
minimum \nealrevisedtext{cost} among all search trees for the given instance.
The objective is to compute such an optimal tree. 


  \paragraph{The successful-queries variant.}
  The problem definition above is for the ``full'' variant, with unrestricted queries.
  In the \emph{successful-queries} variant, illustrated in~Figure~\ref{fig: two-way successful queries},
	only queries to the given keys $k_1,k_2,\ldots,k_n$ are allowed.
  An instance is specified by just those keys and their frequencies $\beta_1,\beta_2,\ldots,\beta_n$,
  and the problem is to find the minimum cost of any tree that handles $\Queries=\{k_1,\ldots,k_n\}$.
  (Section~\ref{sec: dynamic programming algorithm} formally defines ``handles''.)
  While our algorithm's description in Section~\ref{sec: dynamic programming algorithm}
	assumes the full variant,
  it can be adapted directly to the successful-queries variant, retaining its running time of $O(n^4)$,
  by a straighforward modification of the dynamic program\footnote{%
    {%
  Note that the successful-queries variant is \emph{not} a special case of the full variant, since, by definition,
  any search tree for this variant must contain leaves representing all inter-key intervals $(k_a,k_{a+1})$, even 
  though their probabilities $\alpha_a$ are $0$.}
  }.
  This adapted algorithm has the same asymptotic running time as Anderson et~al.'s
 	(who considered only the successful case), but  is much simpler.


\paragraph{Other comparison operators.}
{
  Similarly, for the sake of exposition, and following the model in~\cite{Anderson2002},
  we allow only the two comparison operators $\{<,=\}$.
  As explained later (see the end of Section~\ref{sec: dynamic programming algorithm}),
 our algorithm and its analysis extend naturally to the model with comparison set $\{<, \le, =\}$.
 }
 Due to symmetries and equivalences, this is the only other set of operators 
of interest for the full variants of the problem (allowing unsuccessful queries) that include equality comparisons. 
{%
For example, operator $\ge$ can be replaced by operator $<$ by swapping the subtrees of each node with the $\ge$
comparison.
}
We note, in passing, that Hu and P.~A.\ Tucker~\cite{hu_optimal_1998} have shown that the full variant
with only comparison operators in $\{<, \le\}$ allowed can be solved in time
$O(n \log n)$ by reducing it to an instance of alphabetic coding with $2n+1$ keys and applying the algorithm of
Hu and A.~C.\ Tucker~\cite{Hu1971}.




\section{The Dynamic-Programming Algorithm}\label{sec: dynamic programming algorithm}


Within any tree having the RMLK property, at any node $N$,
the set $\Queries$ of queries reaching $N$ has the following form:
$\Queries$ consists of the queries lying in some inter-key interval,
minus some number of heaviest keys removed.
The algorithm will solve one subproblem for each set $\Queries$ of this form.
Next we explicitly define these sets and their subproblems.

Fix a permutation
$\key {\pi(1)}, \key {\pi(2)},\ldots{},\key {\pi(n)}$ of the keys
that orders them by nondecreasing weight, breaking ties arbitrarily.
For each key $\key {\pi(r)}$, call $r$ the \emph{rank} of the key.
A set $\Queries$ is \emph{valid} if either (i) $\Queries$ is a singleton key, of the form $\Queries = \braced{k_b}$,
or (ii) $\Queries$ is an inter-key interval with some largest-rank keys removed
--- that is, for some $i, j \in \{0,1,\ldots, n+1\}$
with $i<j$,
and $h\in\{0, 1,\ldots,n\}$, 
the set $\Queries$ equals $\querysubset i j h$, defined as
\lineNumberPatch
\begin{equation*}
  \querysubset i j h\; = \;  [\key i, \key j) - \braced{ \key {\pi(h+1)} , \key {\pi(h+2)} , \ldots , \key {\pi(n)}, \key 0}.
\end{equation*}
(These sets are not necessarily distinct
--- for example, it is possible that
$\querysubset{i}{j}{h} = \querysubset{i}{j}{h'}$
even though, say, $h\ne h'$. 
Removing $k_0$ is needed only to deal with the case when $i=0$.)

Note that $\querysubset i j h$ is a union of some keys
and ``failure'' intervals $(k_a, k_{a+1})$. Let $\treeweight{\querysubset i j h}$ denote the
total weight of these keys and intervals. Given a search tree $T$, if those keys and intervals
are also the ones associated with $T$'s leaves, say that \emph{$T$ handles $\querysubset i j h$}
(so, for each $\query\in \querysubset i j h$, the tree correctly identifies $\query$).
Define $\optcost(i,j,h)$ to be the minimum cost of any tree that handles $\querysubset i j h$.
Note that $\querysubset 0 {n+1} n = (-\infty, +\infty)$, so the goal is to return $\optcost(0, n+1, n)$.
(If desired, an optimal tree can also be constructed in the standard fashion.)

The algorithm computes $\optcost(0, n+1, n)$
using the following recurrence relation: 


\begin{enumerate}[(a)]

\item
  If $j=i+1$ and $k_i\not\in \querysubset i j h$,
  then $\querysubset i j h = (k_i, k_{i+1})$
  so $\optcost(i,j,h) = 0$. 

\item
  If $h> 0$ and $\key{\pi(h)} \notin \querysubset i j h$,
  then $\querysubset i j h =\querysubset {i} {j} {h-1}$,
  so $\optcost(i,j,h) = \optcost(i,j,h-1)$.

\item Else,
  \(\displaystyle
  \optcost(i,j,h) \,=\,
  \treeweight {\querysubset i j h} + 
    \min\begin{cases}
      \optcost(i,j,h-1) \textit{~~~(if $h>0$, else $+\infty$),}
      \\[0.3em]
      \min \big\{\, 
      \optcost(i,b,h) + \optcost(b,j,h) : i < b < j\,\big\}. 
    \end{cases}
  \)
\end{enumerate}


\paragraph{Correctness.}
For any tree $T$ that handles $\querysubset i j h$,
note that $\cost(T)$ is zero if $T$ is a leaf,
and otherwise $\cost(T) = \treeweight{\querysubset i j h}+ \cost(T_L) + \cost(T_R)$,
where $T_L$ and $T_R$ are the left and right subtrees.

By inspection, the cases in the recurrence relation are exhaustive.
Likewise, Cases (a) and (b) are correct. For Case (c),
the two terms in the right-hand side 
correspond to creating a tree whose root does either
(i) an equality-test with key $\key {\pi(h)}$
(if $h>0$ so that $\querysubset i j h$ contains some keys,
with $\key {\pi(h)}$ being the one of largest rank),
or (ii)  an inequality comparison with some key $\key b$.
By Theorem~\ref{thm: root RMLK} in Section~\ref{sec: structure theorem}
there is an optimal tree of this form that handles $\querysubset i j h$,
so the recurrence is correct in Case (c).
(Note that if $j=i+1$, the range of the minimum in the second term is empty,
and there no trees of type (ii) above.  In this case,
we take the minimum to be infinite, so the recurrence remains correct.)


\paragraph{Running time.}  
The number of subproblems $\querysubset i j h$ is $O(n^3)$.
For each subproblem, by inspection of the recurrence,
$\optcost(i, j, h)$ can be computed in $O(n)$ time.
Thus the running time is $O(n^4)$. This proves Corollary~\ref{corollary: main}:

\begin{corollary}\label{corollary: main}
  There is an $O(n^4)$-time algorithm
  for computing optimal two-way-comparison search trees (with unrestricted queries).
\end{corollary}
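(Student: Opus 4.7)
The plan is to assemble the three ingredients already laid out in Section~\ref{sec: dynamic programming algorithm}: the recurrence relation (a)--(c), the structural Theorem~\ref{thm: root RMLK}, and a count of subproblems. The deliverable is a running-time analysis of the memoized evaluation of $\optcost(0, n+1, n)$, which by definition equals the optimal cost for the full instance since $\querysubset 0 {n+1} n = (-\infty, \infty)$.

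First I would establish correctness by induction on $|\querysubset i j h|$. The base case is (a), where the set is a single failure interval and a zero-cost leaf suffices; case (b) is just a rewriting step noting that if $\key{\pi(h)}\notin\querysubset i j h$ then $\querysubset i j h =\querysubset i j {h-1}$. For the inductive step (c), I invoke Theorem~\ref{thm: root RMLK} on the sub-instance $\querysubset i j h$: there is an optimal tree $T$ handling $\querysubset i j h$ whose root $r$ is either (i) an equality test against the highest-rank key currently present, which is $\key{\pi(h)}$ (this is exactly what RMLK forces), or (ii) an inequality comparison $\query<\key b$ for some $i<b<j$. In either case $\cost(T)=\treeweight{\querysubset i j h}+\cost(T_L)+\cost(T_R)$, and crucially, by construction the query sets handled by $T_L$ and $T_R$ are again of the form $\querysubset{\cdot}{\cdot}{\cdot}$: an $\key{\pi(h)}$-equality test splits into a leaf (cost $0$) and $\querysubset i j {h-1}$, while a $\key b$-inequality test splits into $\querysubset i b h$ and $\querysubset b j h$. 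Applying the inductive hypothesis to each child yields exactly the two expressions on the right-hand side of (c).

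Next I would bound the running time. The triples $(i,j,h)$ with $0\le i<j\le n+1$ and $0\le h\le n$ number $O(n^3)$. A one-time $O(n^3)$ precomputation (using prefix sums of the $\alpha$'s together with the rank permutation $\pi$) tabulates every $\treeweight{\querysubset i j h}$. Given this table and the sub-answers, the right-hand side of (c) is a minimum over at most $n$ choices of $b$, each an $O(1)$ lookup, so each entry is filled in $O(n)$ time; the total is $O(n^3)\cdot O(n)=O(n^4)$. Evaluation order is straightforward: lexicographically by $(j-i,h)$.

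The main conceptual obstacle --- whether restricting to subproblems of the form $\querysubset i j h$ actually captures an optimal tree even in the presence of ties among the weights --- is exactly what Theorem~\ref{thm: root RMLK} disposes of. Without RMLK, the reachable query sets at internal nodes of an optimal tree could be arbitrary subsets of the keys, blowing up the subproblem count to $\Theta(2^n)$; so the entire $O(n^4)$ bound rests on that structural theorem forcing any such reachable set to be an inter-key interval with a prefix of highest-rank keys removed.
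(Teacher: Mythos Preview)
Your proof is correct and follows essentially the same approach as the paper: correctness of the recurrence via Theorem~\ref{thm: root RMLK} (to justify restricting the root's equality test to $k_{\pi(h)}$), and the $O(n^3)$ subproblem count times $O(n)$ work per subproblem for the running-time bound. Your write-up is a bit more explicit than the paper's (the induction, the weight-table precomputation, the evaluation order), but the underlying argument is identical.
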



{
\paragraph{Extension to operators $\{=,<, \le \}$.}
The algorithm and its proof of correctness extend naturally to
the model with three comparison operators, $\{=,<,\le \}$,  potentially allowing lower 
tree costs (even for the case $n = 1$). The overall principle of the algorithm remains the same,
although the presence of two inequality operators introduces minor technical complications.
The modified algorithm will have four types of subproblems that correspond to four types of
inter-key intervals: $(k_i,k_j)$, $[k_i,k_j)$, $(k_i,k_j]$, and $[k_i,k_j]$, each
with some number $h$ of holes. In the recurrence for each type of subproblem,
the minimum in (c) will have three choices that correspond to operators $=$, $<$ and $\le$.
As for the correctness proof, the statement of Lemma~\ref{lemma: root MLK} needs to be modified to
say in part~(b) that ``The root of $T$ does an inequality comparison''.
The proofs of Lemma~\ref{lemma: root MLK}, Lemma~\ref{lemma: side weights},
and Theorem~\ref{thm: root RMLK} apply as presented.
}





\section{Statement and Proof of Theorem~\ref{thm: root RMLK}}\label{sec: structure theorem}


Fix the keys $k_1, k_2, \ldots, k_n$ and their associated weights. Let $\Queries$ be any valid set,
and consider the subproblem of finding a minimum-cost (two-way-comparison) tree that handles $\Queries$.

\begin{theorem}\label{thm: root RMLK}
  Fix any key $k_b$ of largest weight among keys in $\Queries$.
  Then some optimal tree $T$ satisfies one of the following three conditions:
  \begin{enumerate}[(i)]
  \item $T$ consists of a single leaf.
  \item The root of $T$ does a less-than comparison.
  \item The root of $T$ does an equality test to $k_b$.
  \end{enumerate}
\end{theorem}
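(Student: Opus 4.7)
The plan is to reduce Theorem~\ref{thm: root RMLK} to Anderson \etals MLK property via a small perturbation of the weights. If $\Queries$ is a singleton or empty, every tree handling it is a leaf, so condition~(i) holds and we are done. Otherwise, I would exhibit an optimal tree whose root either does a less-than comparison or equality-tests the specific key $k_b$ given in the theorem statement.

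The key trick is to perturb only the weight of $k_b$: set $\beta_b' = \beta_b + \epsilon$ for a small $\epsilon > 0$, leaving all other weights unchanged. In the perturbed instance, $k_b$ becomes the \emph{uniquely} maximum-weight key in $\Queries$. Applying the MLK property (proved by Anderson \etal and, as noted in the excerpt, extending directly to the unrestricted variant) to the perturbed instance yields an optimal tree $T^\epsilon$ for the perturbed instance in which every equality-test node tests a maximum-weight key among keys reaching it. In particular, the root of $T^\epsilon$ is either a single leaf, a less-than node, or an equality test; and in the last case, since $k_b$ is strictly heaviest in $\Queries$ under the perturbed weights, that equality must be to $k_b$. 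Hence $T^\epsilon$ satisfies (i), (ii), or (iii).

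It remains to argue that, for $\epsilon$ small enough, $T^\epsilon$ is still optimal for the \emph{original} instance. For any tree $T$ handling $\Queries$, the two costs differ by exactly $\epsilon \cdot \depth_T(k_b)$, which is at most $\epsilon \cdot n$. Since the set of tree topologies handling $\Queries$ is finite, any strictly suboptimal tree in the original instance has cost exceeding the original optimum by some fixed $\delta > 0$. Choosing $\epsilon < \delta / n$ forces every perturbed-optimum to be an original-optimum, so $T^\epsilon$ is optimal for the original instance and satisfies one of (i)--(iii), as required.

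The main obstacle is verifying cleanly that Anderson \etals MLK property genuinely applies to the perturbed instance in the unrestricted-queries setting with an arbitrary valid set $\Queries$. Once that ``black box'' is in hand, the perturbation argument is short and the finiteness/gap argument at the end is routine; the novelty is that a tiny additive perturbation acts as a canonical tie-breaking rule without reworking the side-weights proof itself.
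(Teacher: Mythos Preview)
Your proposal is correct and takes essentially the same approach as the paper: perturb $\beta_b$ upward so that $k_b$ becomes the unique maximum-weight key, invoke the MLK property (the paper's Lemma~\ref{lemma: root MLK}) as a black box on the perturbed instance, and then use finiteness of the tree topologies to conclude that a perturbed optimum is an original optimum. The only cosmetic difference is that the paper runs a sequence $\epsilon=1/i$ and uses pigeonhole plus a limit, whereas you choose a single sufficiently small $\epsilon$ via an explicit gap argument; these are interchangeable formalizations of the same idea (and your depth bound should be $2n$ rather than $n$, but this is immaterial).
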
 
(See also the stronger Theorem~\ref{thm: root stronger} in Appendix~\ref{sec: root stronger}, with a different proof.)

\newcommand{\lemmaRootMLK}{          
  Some optimal tree $T$ satisfies one of the following three conditions:
  \begin{enumerate}[(a)]
  \item $T$ consists of a single leaf. 
  \item The root of $T$ does a less-than comparison.
  \item The root of $T$ does an equality test to some key $k_b$ of maximum weight among keys in $\Queries$.
  \end{enumerate}
}
\begin{proof}
  The rest of this section gives the proof.
  It uses the following lemma as a black box.
  \begin{lemma}\label{lemma: root MLK}
    \lemmaRootMLK
  \end{lemma}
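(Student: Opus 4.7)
The plan is to reduce Theorem~\ref{thm: root RMLK} to Lemma~\ref{lemma: root MLK} via a perturbation argument. The lemma already guarantees an optimal tree whose root is a leaf, a less-than comparison, or an equality test to \emph{some} maximum-weight key of $\Queries$; the only gap is that this ``some'' key might not be the specific $k_b$ fixed by the theorem. The idea is to nudge the weight of $k_b$ upward by an arbitrarily small amount so that $k_b$ becomes the unique maximum-weight key of $\Queries$, apply the lemma to the perturbed instance, and then argue that the resulting optimal tree for the perturbed instance is also optimal for the original instance.

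Concretely, I would define a perturbed instance $\Instance_\epsilon$ by replacing the key weight $\beta_b$ with $\beta_b + \epsilon$ for some small $\epsilon > 0$, leaving every other $\beta$ and every $\alpha$ unchanged. Since $k_b$ was already a maximum-weight key of $\Queries$, it is now strictly heavier than every other key of $\Queries$ in $\Instance_\epsilon$. Applying Lemma~\ref{lemma: root MLK} to $\Instance_\epsilon$ yields an optimal tree $T_\epsilon$ for $\Instance_\epsilon$ satisfying one of: (a) $T_\epsilon$ is a single leaf; (b) its root does a less-than comparison; or (c) its root does an equality test to a maximum-weight key of $\Queries$ in $\Instance_\epsilon$. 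In case (c), uniqueness forces this key to be $k_b$, so $T_\epsilon$ satisfies one of the conditions (i)--(iii) of the theorem.

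The remaining step is to show that, for all sufficiently small $\epsilon > 0$, any tree optimal for $\Instance_\epsilon$ is also optimal for the original instance $\Instance$. I would use two elementary facts: only finitely many distinct trees handle $\Queries$; and for each such tree $T$, its cost in $\Instance_\epsilon$ equals its cost in $\Instance$ plus $\epsilon\cdot\depth_T(k_b)$, because perturbing only $\beta_b$ changes the leaf-weighted-depth sum in~\eqref{eq: cost} by exactly this amount. Hence any tree strictly suboptimal in $\Instance$ remains strictly suboptimal in $\Instance_\epsilon$ once $\epsilon$ is small enough, so every tree optimal in $\Instance_\epsilon$ is also optimal in $\Instance$. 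Taking $\epsilon$ small enough, $T_\epsilon$ is therefore an optimal tree for $\Instance$ satisfying the theorem.

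The main obstacle is conceptual rather than technical: the tempting direct approach --- take an optimal tree whose root tests some other maximum-weight key $k_{b'}$ and locally transform it into one testing $k_b$ --- is awkward, because $k_b$ may sit deep inside an inequality-structured subtree and swapping its role with $k_{b'}$ could disturb many inequality comparisons along the way. The perturbation sidesteps this surgery entirely by letting the proof of Lemma~\ref{lemma: root MLK} itself produce a tree that already tests $k_b$ at the root, once the tie among maximum-weight keys has been broken in favour of $k_b$. Given the lemma, the remainder of the argument --- finiteness of the tree space plus linearity of cost in the weights --- is standard.
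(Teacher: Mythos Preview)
Your proposal does not prove the statement in question. The target is Lemma~\ref{lemma: root MLK}, which asserts that some optimal tree has its root as a leaf, a less-than comparison, or an equality test to \emph{some} maximum-weight key in $\Queries$. What you have written is a proof of Theorem~\ref{thm: root RMLK} \emph{assuming} Lemma~\ref{lemma: root MLK}: you explicitly invoke the lemma for the perturbed instance $\Instance_\epsilon$ and then argue by continuity that the resulting tree is optimal for $\Instance$. For the purpose of establishing the lemma, this is circular --- the lemma is precisely the black box you are relying on, and nothing in your write-up supplies an independent proof of it.

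The paper's proof of Lemma~\ref{lemma: root MLK} (in Appendix~\ref{sec: MLK proof}) is entirely different and involves no perturbation. It uses Anderson \etals \emph{side-weights} technique: for each node $N$ of an optimal tree one defines $\sw N$ (the tested key's weight if $N$ is an equality test, the lighter child's subtree weight if $N$ is an inequality, zero at a leaf), and shows via four local rotation/swap cases that side-weights are non-increasing along any root-to-leaf path (Lemma~\ref{lemma: side weights}). Given this monotonicity, if the root $N$ does an equality test to some key $k_b$, and $P$ is the parent of the leaf for a heaviest key $k_a$ (after a harmless replacement of $P$'s comparison by an equality test to $k_a$), then $\beta_b=\sw N\ge\sw P=\beta_a$, so $k_b$ itself is already a maximum-weight key. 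This is the argument you are missing.

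As an aside, had the target actually been Theorem~\ref{thm: root RMLK}, your perturbation argument would be essentially the paper's own: the paper takes a sequence $\epsilon=1/i$ and uses pigeonhole on the finite tree space to extract one tree optimal along a subsequence, whereas you argue directly that for $\epsilon$ small enough every $\Instance_\epsilon$-optimal tree is $\Instance$-optimal. These are equivalent formulations of the same idea.
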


  (Anderson \etal prove essentially the same lemma (their Corollary 3),
  but for the successful-queries variant. Their proof uses their side-weights technique
  and extends essentially unchanged to our setting.
  For completeness we include the proof in Appendix~\ref{sec: MLK proof}.)

  \smallskip
  
  {
    The idea of the proof is to
    infinitesimally perturb the weights
    to make $k_b$ the uniquely largest-weight key in $\Queries$.
    Lemma~\ref{lemma: root MLK} implies that there is a tree $T$
    that is optimal for the perturbed instance
    and has one of the Properties (a)--(c),
    which (given the perturbation) implies that $T$ has one of the Properties (i)--(iii).
    The perturbation is sufficiently small so that any optimal tree ($T$, in particular) for the perturbed instance
    must also be optimal for the original instance.
    (Similar arguments have been used for other problems, such as the Minimum Spanning Tree problem,
    to extend algorithms for distinct-weights instances
    to algorithms for general instances,
    see e.g.~\cite[\S 4.5]{kleiberg_tardos_algorithm_design_2006}.)
    Here are the details.

  }
  For each integer $i\ge 1$, define a new instance with the same set of keys but 
  with  the weight of key $k_b$ increased by $1/i$. That is, let 
  $\beta^i_b = \beta_b+1/i$ and $\beta_{a}^i =\beta_{a}$ for all $a\ne b$.
  By Lemma~\ref{lemma: root MLK}, 
  there is an optimal tree $T^i$ for the modified instance 
    such that $T^i$ satisfies one of (a)--(c) of Lemma~\ref{lemma: root MLK} for the modified instance,
    and therefore satisfies one of (i)--(iii) of the theorem for the original instance.
  Let tree $T$ be such that $T=T^i$ for infinitely many $i$. 
  (Such a $T$ must exist, because the possible trees for each modified instance
  are the same, and there are finitely many of them.)
  Then $T$ is a possible tree for the original instance and has one of Properties (i)--(iii), as desired.
  To finish we show that $T$ is optimal for the original instance.

  Let $T^*$ be an optimal tree for the original instance. 
  For all $i$ such that $T=T^i$, 
  \lineNumberPatch 
  \begin{align*}
    \cost(T, \beta)
    &{} = \cost(T, \beta^i) - \depth_{T}(k_b)/i && \text{by definition of $\beta^i_a$'s and Equation~\eqref{eq: cost}} \\
    &{} \le \cost(T, \beta^i) && \\
    &{} = \cost(T^i, \beta^i) && \text{because $T = T^i$ for this $i$} \\
    &{} \le \cost(T^*, \beta^i) && \text{because $T^i$ is optimal for $\beta^i_a$'s} \\
    &{} = \cost(T^*, \beta) + \depth_{T^*}(k_b)/i && \text{by definition of $\beta^i_a$'s and Equation~\eqref{eq: cost}.}
  \end{align*}  
  (In this derivation, arguments $\beta$ or $\beta^i$ in the cost function
  indicate which key weights are used to compute it.)
  This holds as $i\rightarrow\infty$, so $\cost(T, \beta) \le \cost(T^*, \beta)$,
  and thus $T$ is also optimal for the original instance.
\end{proof}


\paragraph{Remarks.}
The correct statement of Theorem~\ref{thm: root RMLK} is somewhat delicate.
Consider the following statement:
\emph{If there is an optimal tree whose root does an equality test,
  then, for any maximum-weight key $k$ in $\Queries$,
  there is an optimal tree whose root does an equality test to $k$.}
This seemingly similar statement is false.
Figure~\ref{fig: discontinuity} gives a counterexample. 

In that figure, keys 1 and 2 both have maximum weight
among keys reaching the right child of the root.
Trees (a) and (b) are optimal.
In (a), the right child of the root does an equality test to key $1$,
but no optimal subtree for that subproblem
(the subproblem with valid set $\Queries=[1,\infty)$)
has an equality test to key $2$ at the root
--- tree (c) has minimum cost among trees with such a subtree.
This counter-example also holds in the problem variant
where all comparisons in $\{<, \le, =\}$ are allowed.\footnote
{For the succesful-queries variant no such counter-example is possible.
  By Theorem~\ref{thm: root stronger} in Appendix~\ref{sec: root stronger},
  if some optimal tree has an equality test at the root, then for \emph{any} maximum-weight key 
  there is an optimal tree that uses this key in the equality-test at the root.
  This stronger property, however, does not seem to have any algorithmic implications.}

\begin{figure}
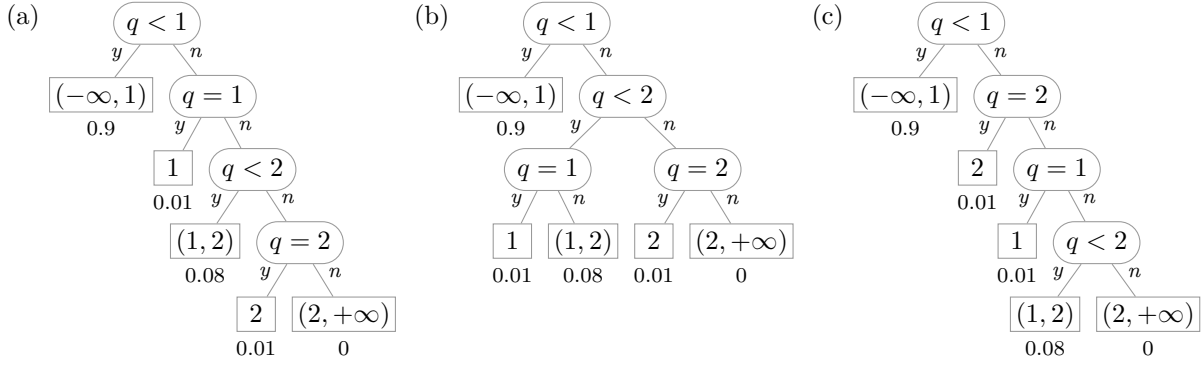
\small

  \renewcommand{\treeLabelSize}{\scriptsize}
  
  (a)
  \hspace*{-1em}
  \adjustbox{valign=t}{
    \begin{TREE}
      for tree = { l sep=0em }, 
      [{$q<1$}
        [{$(-\infty, 1)$}, weight={0.9}]
        [{$q=1$}
          [1, weight={0.01}]
          [{$q<2$}
            [{$(1, 2)$}, weight={0.08}]
            [{$q=2$}
              [2, weight={0.01}]
              [{$(2,+\infty)$}, weight={0}]
            ]
          ]
        ]
      ]
    \end{TREE}
  }
  (b)
  \hspace*{-1em}
  \adjustbox{valign=t}{
    \begin{TREE}
      for tree = { l sep=0em }, 
      [{$q<1$}
        [{$(-\infty, 1)$}, weight={0.9}]
        [{$q<2$}
          [{$q=1$}
            [1, weight={0.01}]
            [{$(1, 2)$}, weight={0.08}]
            ]
          [{$q=2$}
            [2, weight={0.01}]
            [{$(2,+\infty)$}, weight={0}]
            ]
          ]
        ]
    \end{TREE}
  }
  (c)
  \hspace*{-1em}
  \adjustbox{valign=t}{
    \begin{TREE}
      for tree = { l sep=0em }, 
      [{$q<1$}
        [{$(-\infty, 1)$}, weight={0.9}]
        [{$q=2$}
          [2, weight={0.01}]
          [{$q=1$}
            [1, weight={0.01}]
            [{$q<2$}
              [{$(1, 2)$}, weight={0.08}]
              [{$(2,+\infty)$}, weight={0}]
              ]
            ]
          ]
        ]
    \end{TREE}
  }
  
  \caption{
    Trees for a full instance with $\Keys=\{1,2\}$,
    $(\beta_1,\beta_2)=(0.01, 0.01)$ and
    $(\alpha_0,\alpha_1,\alpha_2) =(0.9, 0.08, 0)$.
    As discussed in the remarks,
    the subproblem with valid set $\Queries=[1,\infty)$
    has an optimal solution with an equality test to key $1$,
    but none with an equality test to key $2$, even though $\beta_1 = \beta_2$.
  }\label{fig: discontinuity}
\end{figure}

\smallskip 

The proof of Theorem~\ref{thm: root RMLK}  
perturbs the weights infinitesimally, giving $k_b$ the unique largest weight
and (via Lemma~\ref{lemma: root MLK}) forcing any root equality test to be to key $k_b$.
It then observes that, because the minimum tree cost is a continuous function of the weights,
any tree that is optimal for the (infinitesimally!) perturbed instance
must also be optimal for the original instance.   
This infinitesimal perturbation can make
\emph{all} previously optimal trees with equality tests at the root non-optimal.
(For example, in Figure~\ref{fig: discontinuity}, after an infinitesimal increase in $\beta_2$, 
only tree (b) would be optimal.)

\smallskip

The RMLK property from the introduction is a corollary of the theorem.
To prove it, let $T$ be the optimal tree for $\Queries$, with root as specified in the theorem.
Then (using induction on $n$) replace each of the two subtrees
by an optimal one (for its subproblem) having the RMLK property.
The resulting tree $T'$ is also an optimal tree for $\Queries$, and has the RMLK property.

\smallskip

In fact, there must be an optimal tree with the following stronger property.
\emph{At any node $N$ that does an equality test,
  if there are $d$ largest-weight keys in the search space reaching $N$,
  then the top $d$ nodes on the right spine in $T_N$ do equality tests to those $d$ keys in some order
  (and permuting that order arbitrarily preserves optimality).}
See Corollary~\ref{cor: root stronger} in Appendix~\ref{sec: root stronger}.



\smallskip

{
\paragraph{Acknowledgements.}
We are very grateful to anonymous reviewers of the journal version for insightful comments
that helped us improve the presentation.
}



\bibliographystyle{plainurl} 
\bibliography{optimal_search_trees}

\begin{thebibliography}{10}

\bibitem{Anderson2002}
R.~Anderson, S.~Kannan, H.~Karloff, and R.~E. Ladner.
\newblock Thresholds and optimal binary comparison search trees.
\newblock {\em Journal of Algorithms}, 44(2):338--358, August 2002.
\newblock \href {https://doi.org/10.1016/S0196-6774(02)00203-1}
  {\path{doi:10.1016/S0196-6774(02)00203-1}}.

\bibitem{andersson1991note}
A.~Andersson.
\newblock A note on searching in a binary search tree.
\newblock {\em Software: Practice and Experience}, 21(10):1125--1128, 1991.
\newblock \href {https://doi.org/10.1002/spe.4380211009}
  {\path{doi:10.1002/spe.4380211009}}.

\bibitem{Bein2013}
W.~Bein.
\newblock Advanced techniques for dynamic programming.
\newblock In Panos~M. Pardalos, Ding-Zhu Du, and Ronald~L. Graham, editors,
  {\em Handbook of Combinatorial Optimization}, pages 41--92. Springer New
  York, 2013.
\newblock \href {https://doi.org/10.1007/978-1-4419-7997-1_28}
  {\path{doi:10.1007/978-1-4419-7997-1_28}}.

\bibitem{Chambers:1999:EMP:320384.320407}
C.~Chambers and W.~Chen.
\newblock Efficient multiple and predicated dispatching.
\newblock In {\em Proceedings of the 14th ACM SIGPLAN Conference on
  Object-Oriented Programming, Systems, Languages, and Applications}, OOPSLA
  '99, pages 238--255, New York, NY, USA, November 1999. ACM.
\newblock \href {https://doi.org/10.1145/320384.320407}
  {\path{doi:10.1145/320384.320407}}.

\bibitem{chrobak_etal_isaac_2015}
M.~Chrobak, M.~Golin, J.~I. Munro, and N.~E. Young.
\newblock Optimal search trees with 2-way comparisons.
\newblock In Khaled Elbassioni and Kazuhisa Makino, editors, {\em Algorithms
  and Computation. ISAAC 2015}, volume 9472 of {\em Lecture {Notes} in
  {Computer} {Science}}, pages 71--82. Springer Berlin Heidelberg, 2015.
\newblock See~\cite{DBLP:journals/corr/ChrobakGMY15} for erratum.
\newblock \href {https://doi.org/10.1007/978-3-662-48971-0_7}
  {\path{doi:10.1007/978-3-662-48971-0_7}}.

\bibitem{chrobak_etal_huangs_algorithm_2018}
M.~Chrobak, M.~Golin, J.~I. Munro, and N.~E. Young.
\newblock On {Huang} and {Wong's} algorithm for generalized binary split trees,
  2021.
\newblock To appear in \textit{Acta Informatica}.
\newblock \href {http://arxiv.org/abs/1901.03783} {\path{arXiv:1901.03783}}.

\bibitem{CHROBAK2021104707}
M.~Chrobak, M.~Golin, J.~I. Munro, and N.~E. Young.
\newblock On the cost of unsuccessful searches in search trees with two-way
  comparisons.
\newblock {\em Information and Computation}, 2021.
\newblock \href {https://doi.org/10.1016/j.ic.2021.104707}
  {\path{doi:10.1016/j.ic.2021.104707}}.

\bibitem{DBLP:journals/corr/ChrobakGMY15}
M.~Chrobak, M.~Golin, J.~I. Munro, and N.~E. Young.
\newblock Optimal search trees with two-way comparisons, 2021.
\newblock Includes erratum for, and pointers to journal versions of other
  results from,~\cite{chrobak_etal_isaac_2015}.
\newblock \href {http://arxiv.org/abs/1505.00357} {\path{arXiv:1505.00357}}.

\bibitem{chrobak_etal_simple_bcst_algorithm_2021}
M.~Chrobak, M.~Golin, J.~I. Munro, and N.~E. Young.
\newblock A simple algorithm for optimal search trees with two-way comparisons,
  2021.
\newblock To appear in \emph{ACM Transactions on Algorithms}
  (\href{https://doi.org/10.1145/3477910}{\tt doi:10.1145/3477910}).
\newblock \href {http://arxiv.org/abs/2103.01084} {\path{arXiv:2103.01084}}.

\bibitem{Cormen:2009:IAT:1614191}
T.~H. Cormen, C.~E. Leiserson, R.~L. Rivest, and C.~Stein.
\newblock {\em Introduction to Algorithms}.
\newblock The MIT Press, third edition, 2009.

\bibitem{dasgupta_algorithms_2006}
S.~Dasgupta, C.~Papadimitriou, and U.~Vazirani.
\newblock {\em Algorithms}.
\newblock McGraw-Hill Education, Boston, first edition, September 2006.

\bibitem{Garsia1977}
A.~Garsia and M.~Wachs.
\newblock A new algorithm for minimum cost binary trees.
\newblock {\em SIAM Journal on Computing}, 6(4):622--642, December 1977.
\newblock \href {https://doi.org/10.1137/0206045} {\path{doi:10.1137/0206045}}.

\bibitem{Gilbert1959}
E.~N. Gilbert and E.~F. Moore.
\newblock Variable-length binary encodings.
\newblock {\em Bell System Technical Journal}, 38:933--967, 1959.

\bibitem{Hester1986}
J.~H. Hester, D.~S. Hirschberg, S.~H. Huang, and C.~K. Wong.
\newblock Faster construction of optimal binary split trees.
\newblock {\em Journal of Algorithms}, 7(3):412--424, September 1986.
\newblock \href {https://doi.org/10.1016/0196-6774(86)90031-3}
  {\path{doi:10.1016/0196-6774(86)90031-3}}.

\bibitem{Hu1971}
T.~C. Hu and A.~C. Tucker.
\newblock Optimal computer search trees and variable-length alphabetical codes.
\newblock {\em SIAM Journal on Applied Mathematics}, 21(4):514--532, 1971.
\newblock \href {https://doi.org/10.1137/0121057} {\path{doi:10.1137/0121057}}.

\bibitem{hu_optimal_1998}
T.~C. Hu and P.~A. Tucker.
\newblock Optimal alphabetic trees for binary search.
\newblock {\em Information Processing Letters}, 67(3):137--140, August 1998.
\newblock \href {https://doi.org/10.1016/S0020-0190(98)00101-X}
  {\path{doi:10.1016/S0020-0190(98)00101-X}}.

\bibitem{StephenHuang1984}
S-H. Huang and C.~K. Wong.
\newblock Generalized binary split trees.
\newblock {\em Acta Informatica}, 21(1):113--123, May 1984.
\newblock \href {https://doi.org/10.1007/BF00289143}
  {\path{doi:10.1007/BF00289143}}.

\bibitem{kleiberg_tardos_algorithm_design_2006}
J.~M. Kleinberg and {\'{E}}.~Tardos.
\newblock {\em Algorithm Design}.
\newblock Addison-Wesley, 2006.

\bibitem{Knuth1971}
D.~E. Knuth.
\newblock Optimum binary search trees.
\newblock {\em Acta Informatica}, 1(1):14--25, March 1971.
\newblock \href {https://doi.org/10.1007/BF00264289}
  {\path{doi:10.1007/BF00264289}}.

\bibitem{Knuth1998}
D.~E. Knuth.
\newblock {\em The Art of Computer Programming, Volume 3: Sorting and
  Searching}.
\newblock Addison-Wesley Publishing Company, Redwood City, CA, USA, 2nd
  edition, 1998.

\bibitem{nagaraj_optimal_1997}
S.~V. Nagaraj.
\newblock Optimal binary search trees.
\newblock {\em Theoretical Computer Science}, 188(1):1--44, November 1997.
\newblock \href {https://doi.org/10.1016/S0304-3975(96)00320-9}
  {\path{doi:10.1016/S0304-3975(96)00320-9}}.

\bibitem{Spuler1994Paper}
D.~Spuler.
\newblock Optimal search trees using two-way key comparisons.
\newblock {\em Acta Informatica}, 31(8):729--740, August 1994.
\newblock \href {https://doi.org/10.1007/BF01178732}
  {\path{doi:10.1007/BF01178732}}.

\bibitem{Spuler1994Thesis}
D.~Spuler.
\newblock {\em Optimal search trees using two-way key comparisons}.
\newblock PhD thesis, James Cook University, 1994.

\end{thebibliography}



\section*{Appendices}

\appendix 

\section{Proof of Lemma~\ref{lemma: root MLK}}\label{sec: MLK proof}

This appendix proves Lemma~\ref{lemma: root MLK}.
First we state Anderson \etals side-weights lemma~\cite[Lemma 2]{Anderson2002},
and prove it in our setting.
(The proof is identical.)
Fix the keys $k_1, k_2, \ldots, k_n$ and associated weights.
Let $\Queries$ be any valid set
and consider the problem of finding
a minimum-cost tree among trees that handle $\Queries$.
Let $T$ be any optimal tree for this problem.

Define the \emph{side-weight} of any node $N$ in $T$, denoted $\sw N$, as follows.
If $N$ is a leaf, then $\sw N = 0$.
If $N$ does an equality test to a key, then $\sw N$ is the weight of that key.
Otherwise ($N$ is an inequality-comparison node) 
$\sw N = \min\braced{\weight{L},\weight{R}}$,
where $L$ and $R$ are $N$'s left and right children.
The utility lemma states that side-weights are monotone along paths from the root:


\begin{lemma}[{\cite[Lemma 2]{Anderson2002}}]\label{lemma: side weights}
  If $P$ is the parent of $N$ in $T$, then $\sw P\ge \sw N$.
\end{lemma}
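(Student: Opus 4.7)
The plan is to prove Lemma~\ref{lemma: side weights} by contradiction. Suppose $T$ is an optimal tree handling $\Queries$, yet some parent--child edge $(P,N)$ in $T$ satisfies $\sw P < \sw N$. I would exhibit a tree $T'$ that handles $\Queries$ and has strictly smaller cost than $T$, contradicting optimality. The construction of $T'$ will be by a local modification of $T$ inside the subtree rooted at $P$. The cost comparison relies on the identity in Equation~\eqref{eq: cost} expressing $\cost(T)$ as the sum of subtree weights $\weight{N'}$ over internal nodes $N'$: since only a bounded number of internal-node subtrees change under a local modification, $\cost(T)-\cost(T')$ reduces to a short algebraic expression in the relevant subtree weights and side-weights.

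The proof proceeds by case analysis on whether each of $P$ and $N$ does an equality test or an inequality comparison; the leaf case is vacuous because $\sw{\cdot}=0$ at leaves. In the cleanest case, where both $P$ and $N$ do equality tests (to $k_P$ and $k_N$ respectively), $N$ must be the ``no'' child of $P$, and simply swapping the two tested keys yields a valid tree $T'$ with $\cost(T)-\cost(T') = \sw N - \sw P > 0$. When both $P$ and $N$ do inequality comparisons, a standard BST-style rotation at the $P$--$N$ edge, promoting $N$'s lighter child up into $P$'s former position, produces a valid $T'$; using the hypothesis $\sw P < \sw N$ together with the fact that an inequality node's side-weight is the minimum of its two children's weights, the cost change again comes out strictly in favour of $T'$.

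The main obstacle is the two mixed cases, in which one of $P, N$ does an equality test while the other does an inequality comparison. A naive rotation is not semantically valid here: the yes/no branches of an equality test and the $<$/$\ge$ branches of an inequality comparison partition the set of queries reaching $P$ along incompatible axes, so merely swapping subtrees can leave a tree that misidentifies some queries. In these cases the transformation must be designed more carefully --- typically by pulling the equality test above the inequality comparison and restructuring the inequality node's two subtrees so that every query still reaches the leaf that correctly identifies it --- and the correctness of the resulting tree as a search tree for $\Queries$ must be verified explicitly before any cost accounting. Once that is in place, Equation~\eqref{eq: cost} again yields $\cost(T)-\cost(T') > 0$ under $\sw P < \sw N$, giving the desired contradiction.
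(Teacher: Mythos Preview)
Your plan is correct and is essentially the paper's own proof: the same four-way case analysis on the comparison types at $P$ and $N$, with in each case a local modification whose cost change (via Equation~\eqref{eq: cost}) together with optimality of $T$ forces $\sw P \ge \sw N$; the paper phrases this directly rather than by contradiction, which is purely cosmetic. Two small points of precision: in the both-inequality case the rotation direction is forced by whether $N$ is $P$'s left or right child, so you do not get to choose which of $N$'s children is promoted---but under $\sw P < \sw N$ the cost strictly drops either way, so no choice is needed; and your one-line recipe for the mixed cases (``pull the equality test above the inequality comparison'') literally covers only the sub-case where $P$ is the inequality node---when $P$ is the equality test it is already on top, and the paper's fix there is to rotate $N$ up and reinsert $P$'s equality test just above whichever of $N$'s former subtrees contains the tested key.
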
       


\begin{proof}
  If $N$ is a leaf, it has side-weight 0 and the property holds, so assume that $N$ is not a leaf.
  There are four cases according to whether at each of $P$ or $N$ the comparison done
  is an equality or an inequality.
  Recall that $T$ is an optimal tree.  
         
  \mycase{1} \emph{Both $P$ and $N$ are inequality comparisons.}
  Assume without loss of generality that $N$ is the right child of $P$.
  Let $T_1$ be the subtree rooted at the child of $P$ that is not $N$.
  Let $T_2$ and $T_3$ be the subtrees rooted, respectively, at the left and right children of $N$.
  Let $\mu_i$ denote the weight of $T_i$ (for $i\in\{1,2,3\})$.
  Then $\sw P = \min\{\mu_1,\mu_2+\mu_3\}$
  and $\sw N = \min \{\mu_2,\mu_3\}$.
  Now do a left rotation at $P$,
  so that $N$ takes the place of $P$,
  $P$ becomes $N$'s left child, and $T_2$ becomes the right subtree of $P$.
  While $T_2$ stays at the same depth,
  $T_1$ moves down and $T_3$ moves up, each by one level. 
  The increase in cost is $\mu_1 - \mu_3$.
  This must be non-negative, as $T$ is optimal, so $\mu_1 \ge \mu_3$.
  This implies $\min\{\mu_1,\mu_2+\mu_3\}\ge \min\{\mu_2,\mu_3\}$,
  that is, $\sw P \ge \sw N$.

  \mycase{2} 
  \emph{Node $P$ is an inequality comparison and $N$ is an equality comparison.}
  Let $T_1$, of weight, say, $\mu_1$, be the subtree rooted at the child of $P$ which is not $N$.
  Let $\mu_2$ be the weight of the left subtree, say $T_2$, of $N$ (consisting of a leaf for the equality-test key of $N$).
  Let $\mu_3$ be the weight of the right subtree, say $T_3$ of $N$.
  We have $\sw P  = \min\{\mu_1, \mu_2 + \mu_3\}$ and $\sw N = \mu_2$.
  Modify $T$ by moving the equality test at $N$ just above $P$.
  Tree $T_1$ moves down one level, $T_3$ stays at the same level, yet $T_2$, of weight $\mu_2$, moves up one level.
  The net increase in cost is $\mu_1 - \mu_2$.
  Since $T$ is optimal, this is non-negative, so $\mu_1 \ge \mu_2$.
  Thus $\min\{\mu_1, \mu_2 + \mu_3\} \ge \mu_2$.
  That is, $\sw P  \ge \sw N$.

  \mycase{3}
  \emph{Both $P$ and $N$ are equality comparisons.}
  Swap the comparisons in $P$ and $N$.
  The increase in cost is $\sw P - \sw N$.
  Since $T$ is optimal, this is non-negative, so $\sw P \ge \sw N$. 

  \mycase{4} 
  \emph{Node $P$ is an equality comparison and $N$ is an inequality comparison.}
  Let $\mu_1$ be the weight of $P$'s left-subtree $T_1$ (consisting of a leaf for the equality-test key of $P$).
  Let $T_2$ and $T_3$ of weight $\mu_2$ and $\mu_3$, respectively, be the subtrees hanging off $N$.
  Move the equality-test node $P$ down just above the appropriate child of $N$.
  Then exactly one of $T_2$ and $T_3$ moves up while $T_1$ moves down. 
  The increase in cost is either $\mu_1 - \mu_2$ or $\mu_1 - \mu_3$.
  Since $T$ is optimal, the increase is non-negative, so $\mu_1 \ge \min\{\mu_2,\mu_3\}$. 
  That is, $\sw P  \ge \sw N$.
\end{proof}

Next is the proof of Lemma~\ref{lemma: root MLK}.
The proof is essentially the same as Anderson \etals proof of~\cite[Corollary 3]{Anderson2002},
which is for successful-queries instances but extends directly to our setting.


\setcounter{lemma}{0}

\begin{lemma}
  \lemmaRootMLK
\end{lemma}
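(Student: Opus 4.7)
The plan is to combine the monotonicity of side-weights along root-to-leaf paths (Lemma~\ref{lemma: side weights}) with a simple structural observation about how a key is distinguished from its right-adjacent failure interval.

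First I handle the easy cases. Let $T$ be any optimal tree. If $T$ is a single leaf, (a) holds; if the root does a less-than comparison, (b) holds; if $\Queries$ contains no keys at all, then every equality test in $T$ is useless (it returns ``not equal'' on every query reaching it), so any optimal tree's root must fall under (a) or (b). Otherwise, the root of $T$ does an equality test to some key $k_c$ and $\Queries$ contains at least one key; let $W$ denote the maximum weight among keys in $\Queries$ and let $k_b \in \Queries$ be any key with $\beta_b = W$. The remaining task is to show $\beta_c = W$.

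Here is the structural observation I will exploit. Because $\Queries$ is a valid set---an inter-key interval $[k_i, k_j)$ minus some keys---and valid sets never delete failure intervals, the fact that $k_b \in \Queries$ forces $(k_b, k_{b+1}) \in \Queries$ as well. The only two-way comparison that distinguishes $k_b$ from any value in $(k_b, k_{b+1})$ is the equality test $q = k_b$: no key $k_a$ lies strictly between $k_b$ and $k_{b+1}$, so every less-than comparison $q < k_a$ gives the same outcome on both, and every equality test $q = k_a$ with $k_a \ne k_b$ returns ``not equal'' on both. Since $T$ correctly identifies $k_b$ and $(k_b, k_{b+1})$ at distinct leaves, their root-to-leaf paths in $T$ must diverge at some node, and that node can only be an equality test to $k_b$; call this node $N$. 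Applying the side-weights lemma along the path from the root down to $N$ yields $\sw{\text{root}} \ge \sw{N} = \beta_b = W$, and since $\sw{\text{root}} = \beta_c$ we obtain $\beta_c \ge W$. Because $W$ is the maximum weight among keys in $\Queries$ and $k_c \in \Queries$ (otherwise the ``equal'' branch at the root would be unreachable, contradicting optimality of $T$), we conclude $\beta_c = W$, giving case (c).

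The main obstacle would be the side-weights lemma itself---four rotation-style rearrangements, each with its own cost bookkeeping---but that has already been established in the appendix and I invoke it as a black box. Beyond that, the only real insight needed is the ``sticky'' relationship between $k_b$ and its right-adjacent failure interval $(k_b, k_{b+1})$, which forces an equality test to $k_b$ to appear somewhere in $T$ and thereby lets the side-weights lemma pin down the weight at the root.
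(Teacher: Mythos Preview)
Your proof is correct, and it reaches the conclusion by a route different from the paper's. Both arguments ultimately apply the side-weights lemma along a root-to-$N$ path where $N$ is an equality-test node to a maximum-weight key; the difference is in how $N$ is obtained. The paper takes the parent $P$ of the leaf for a maximum-weight key $k_a$ and, if $P$ is not already an equality test to $k_a$, \emph{replaces} $P$'s comparison by $\compnode = {k_a}$ (this preserves correctness because only $k_a$ reaches that leaf, so the induced partition at $P$ is unchanged), then invokes side-weights in the modified tree. You instead observe that, for any non-singleton valid set, $k_b\in\Queries$ forces $(k_b,k_{b+1})\in\Queries$, and since only $\compnode = {k_b}$ separates $k_b$ from that interval, an equality test to $k_b$ must \emph{already} appear in $T$. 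Your approach avoids any tree surgery and in fact establishes the slightly stronger statement that \emph{every} optimal tree satisfies (a), (b), or (c); on the other hand it leans on the specific structure of valid sets (non-singleton valid sets never omit failure intervals), whereas the paper's replacement argument goes through for arbitrary query sets $\Queries$. One small cleanup: your separate ``$\Queries$ contains no keys'' case is redundant once you have already assumed the root is neither a leaf nor a less-than node---an equality test to a key outside $\Queries$ would leave the yes-subtree with no leaves, which is structurally impossible---so you can simply fold that into the main case.
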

\begin{proof}
  Let $T$ be an optimal tree with root $N$.
  In the case that $N$ does an inequality comparison, we are done.
  So assume $N$ does an equality test to some key $k_b$, necessarily in $\Queries$.
  Let $k_a$ be any largest-weight key in $\Queries$.
  The parent of the leaf for $k_a$ is some node $P$ in $T$.
  First, if $P$ doesn't do an equality test to $k_a$,
  replace the comparison at $P$ by an equality test to $k_a$,
  without otherwise changing $T$.
  This preserves correctness and optimality.
  Now, in the case that $P$ is the root $N$, we are done.
  In the remaining case, by Lemma~\ref{lemma: side weights}, 
  along the path from $N$ to $P$, the side-weights are non-increasing,
  so $\beta_b = \sw N \ge \sw P = \beta_{a}$,
  and $k_b$ is a largest-weight key in $\Queries$.
\end{proof}

\section{A stronger alternative to Theorem~\ref{thm: root RMLK}}\label{sec: root stronger}

\emph{
  The results in this section
  do not appear in the ACM TALG journal version of this paper~\cite{chrobak_etal_simple_bcst_algorithm_2021}.
}

Fix an instance of either the full variant or the successful-queries variant
(as described in the introduction and illustrated in Figure~\ref{fig: two-way successful queries}(b)).
Consider the subproblem of finding a minimum-cost (two-way-comparison) tree that handles
any given valid set $\Queries$.
(The terms \emph{valid} and \emph{handles}
are defined in Section~\ref{sec: dynamic programming algorithm}.)

\begin{theorem}\label{thm: root stronger}
  For either problem variant, at least one of the following conditions holds:
  \begin{enumerate}[(i)]
  \item The optimal tree consists of a single leaf.
  \item There is an optimal tree whose root does an inequality comparison.
  \item For any ordering of the max-weight keys in $\Queries$,
    there is an optimal tree that starts by doing equality tests to each of those keys in that order
    (before doing any inequality tests or equality tests to keys of lesser weight).
  \end{enumerate}
  For the successful-queries variant, 
  if any optimal tree does an equality test at the root,
  then either Condition (iii) above holds,
  or all keys in $\Queries$ have the same weight
  and, for any ordering of those keys,
  there is an optimal tree that consists solely of equality tests (in the given order) to all but the last of those keys.
  In the latter case, either all keys have weight 0, or $n\le 3$.
\end{theorem} 

\begin{corollary}\label{cor: root stronger}
  For any ordering $\pi$ of the keys,
  there is an optimal tree $T$ such that each subtree $T_N$
  either starts with an inequality comparison,
  or starts with equality tests to all max-weight keys that reach the root $N$ of $T_N$,
  doing the tests in the order in which those keys occur in $\pi$.
\end{corollary}

The corollary follows from the first part of Theorem~\ref{thm: root stronger}.
(Let $T$ be any optimal tree that,
for the subproblem arising at each node,
uses a subtree that satisfies Condition (iii) of the theorem whenever possible,
and otherwise uses a subtree that satisfies Condition (i) or (ii).)

\begin{proof}[Proof of Theorem~\ref{thm: root stronger}]
  Assume that there is an optimal tree whose root does an equality test.
  (Otherwise the theorem holds trivially.)
  Among such trees, let $T$ be one that maximizes the number of equality tests to max-weight keys,
  breaking ties as follows.
  For the full variant, break ties by choosing a tree that maximizes the number of \emph{inversions},
  where an inversion is a pair of nodes such that one is an ancestor of the other,
  with the ancestor doing an inequality test and the descendant doing an equality test to a max-weight key.
  For the succesful-queries variant, break ties by \emph{minimizing} the number of inversions.

  Let $\beta_{\max} = \max_i \beta_ i$. 
  We will use the following utility lemma:
  \begin{lemma}\label{lemma: equal}
    For every maximum-weight key $k_a$,
    along the path traversed by a search for $k_a$ from the root,
    each non-leaf node has side-weight $\beta_{\max}$.
  \end{lemma}
  \begin{proof}
    Let $k_a$ be any max-weight key. 
    The parent of the leaf of $k_a$ does an equality-test to a max-weight key
    (otherwise replacing the comparison at the parent by an equality test to $k_a$
    would preserve optimality and increase the number of equality tests to max-weight keys,
    contradicting the choice of $T$).
    The side-weight of the parent is therefore $\beta_{\max}$.
    By Lemma~\ref{lemma: side weights}, 
    along the path from the parent to the root, the side-weights are non-decreasing.
    On the other hand, the root does an equality test, so its side-weight (which equals its key's weight)
    can't exceed $\beta_{\max}$.
    So every node on the path must have side-weight $\beta_{\max}$.
  \end{proof}

  The body of the proof has four cases.

  \mycase{1} \emph{Suppose that this is a successful-queries instance, 
    and there is an inequality test on the search path to some max-weight key $k_b$.}

  Let $N_b$ be the parent of $k_b$'s leaf, and consider the path from the root to $N_b$.
  The parent $N_b$ does an equality test to a max-weight key
  (otherwise replacing $N_b$ by an equality test to $k_b$
  would give an optimal tree with more equality tests to max-weight keys,
  contradicting the choice of $T$).
  So some inequality test on the path is followed by an equality test.
  Let $\compnode < {k_p}$ and $\compnode = {k_a}$ be such a parent and child.
  Assume that $\compnode = {k_a}$ is the right child of $\compnode < {k_p}$
  (the other case is symmetric).
  As shown in Figure~\ref{fig: case 1}(a),
  let $L$ be the left subtree of $\compnode < {k_p}$
  and let $R$ be the right subtree of $\compnode = {k_a}$.
  By Lemma~\ref{lemma: equal},
  both nodes have side-weight $\beta_{\max}$.
  By definition, this means that $k_a$ has weight $\beta_{\max}$,
  and the minimum of the weights of $\compnode < {k_p}$'s two subtrees is $\beta_{\max}$.

  \begin{figure}\centering \small \renewcommand{\treeLabelSize}{\scriptsize}
    (a)
    \hspace*{-1em}
    \adjustbox{valign=t}{
      \begin{TREE}
        for tree = { s sep=2em }, 
        [{$q < k_p$}
          [{$L$}, subtree]
          [{$q=k_a$}
            [{$k_a$}]
            [{$R$}, subtree]
            ]
          ]
      \end{TREE}
    }
    (b)
    \hspace*{-1em}
    \adjustbox{valign=t}{
      \begin{TREE}
        for tree = { s sep=2em }, 
        [{$q=k_a$}
          [{$k_a$}]
          [{$q < k_p$}
            [{$L$}, subtree]
            [{$R$}, subtree]
            ]
          ]
      \end{TREE}
    }
    (c)
    \hspace*{-1em}
    \adjustbox{valign=t}{
      \begin{TREE}
        for tree = { s sep=2em }, 
        [{$q=k_a$}
          [{$k_a$}]
          [{$L$}, subtree,
            [{$X$}]
            [{$R$}, subtree]
            ]
          ]
      \end{TREE}
    }
    \caption{For Case 1 of the proof of Theorem~\ref{thm: root stronger}.}\label{fig: case 1}
  \end{figure}

  \mycase{1.1}
  \emph{Suppose that the weight of $L$ equals $\beta_{\max}$.}
  Moving the node $\compnode = {k_a}$ just above $\compnode < {k_p}$,
  as shown in Figure~\ref{fig: case 1}(b),
  increases the depth of $L$ by one but reduces the depth of $k_a$'s leaf by one,
  so preserves the cost and yields another optimal tree $T'$.
  But $T'$ has the same number of equality tests to max-weight keys
  (including one at the root),
  and one less inversion than $T$.
  This contradicts the choice of $T$,  so Case 1.1 cannot happen.
  
  \mycase{1.2}
  \emph{Otherwise the weight of $L$ strictly exceeds $\beta_{\max}$.}
  Recalling that the side-weight of $\compnode < {k_p}$ equals $\beta_{\max}$,
  this implies that the weight of $\compnode < {k_p}$'s right subtree equals $\beta_{\max}$,
  which (given that the subtree contains $R$ and the key $k_a$ of weight $\beta_{\max}$)
  implies that all keys in $R$ have weight zero.

  Now consider the following modification to $T$.
  Replace the subtree rooted at $\compnode < {k_p}$ by
  a subtree
  (as shown roughly in Figure~\ref{fig: case 1}(c)) 
  with root $\compnode = {k_a}$,
  whose left child is the leaf of $k_a$,
  and whose right subtree $L'$ is obtained from $L$ as follows.

  Let $Q$ be the set of queries in $\Queries$ whose searches in $T$ reach the root of $R$.
  Because of the node $\compnode < {k_p}$,  each value in $Q$ is at least $k_p$.
  On the other hand, all comparison keys used by nodes in $L$ are strictly less than $k_p$
  (as all queries routed through $L$ are less than $k_p$, and $T$ has no ``redundant'' nodes).
  So no comparison made in $L$ can distinguish between different queries in $Q$.
  So all queries in $Q$, if searched for in $L$,  reach the same leaf, say $X$, in $L$.
  (In the current case, $X$ is the leaf at the end of the right spine.)

  Since this is a successful-queries instance, the leaf $X$ has a single key.
  Obtain $L'$ from $X$ by replacing $X$ by a new equality test to $X$'s key,
  giving the new node left child $X$ and right subtree $R$.
  In the modified tree,
  any query in $Q$ reaches the root of $L$,
  is then routed through $L$ to $R$,
  and then is routed through $R$ just as it was in $T$, to an appropriate leaf.
  Queries not in $Q$ are still handled correctly (although the query to $X$'s leaf has one extra equality test).

  As for cost, the modification decreases the depth of the leaf of $k_a$ by one,
  while increasing the depth $X$'s leaf by one.
  The resulting tree cannot be cheaper than $T$, so $X$'s key must also be a max-weight key.
  But then the modified tree has the same cost but more equality tests to max-weight keys,
  contradicting the choice of $T$.
  So Case 1 cannot happen.

  \mycase{2} \emph{Suppose that this is a full instance, 
    and there is an inequality test on the search path to some max-weight key.}
  Let $\compnode < {k_a}$ be the first inequality-test node on the search path.
  As shown in Figure~\ref{fig: case 2}(a), 
  let $\compnode = {k_p}$ be the parent of $\compnode < {k_a}$,
  let $L$ and $R$ be the subtrees of $\compnode < {k_a}$.
  \begin{figure}
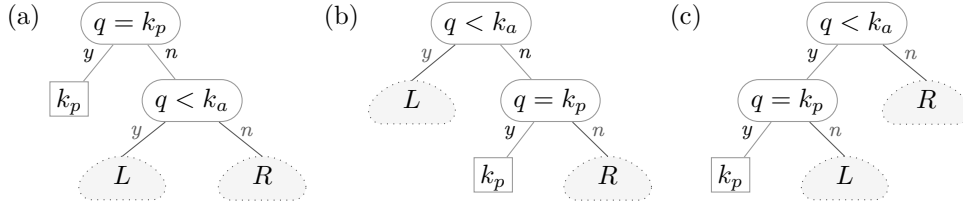
\centering \small \renewcommand{\treeLabelSize}{\scriptsize}
    (a)
    \hspace*{-1em}
    \adjustbox{valign=t}{
      \begin{TREE}
        for tree = { s sep=2em }, 
        [{$q = k_p$}
          [{$k_p$}]
          [{$q<k_a$}
            [{$L$}, subtree]
            [{$R$}, subtree]
            ]
          ]
      \end{TREE}
    }
    (b)
    \hspace*{-1em}
    \adjustbox{valign=t}{
      \begin{TREE}
        for tree = { s sep=2em }, 
        [{$q<k_a$}
          [{$L$}, subtree]
          [{$q = k_p$}
            [{$k_p$}]
            [{$R$}, subtree]
            ]
          ]
      \end{TREE}
    }
    (c)
    \hspace*{-1em}
    \adjustbox{valign=t}{
      \begin{TREE}
        for tree = { s sep=2em }, 
        [{$q<k_a$}
          [{$q=k_p$}
            [{$k_p$}]
            [{$L$}, subtree]
            ]
          [{$R$}, subtree]
          ]
      \end{TREE}
    }
    \caption{For Case 2 of the proof of Theorem~\ref{thm: root stronger}.}\label{fig: case 2}
  \end{figure}
  Consider moving the node $\compnode = {k_p}$ just below $\compnode < {k_a}$,
  either just above $R$ (Figure~\ref{fig: case 2}(b))
  or just above $L$ (Figure~\ref{fig: case 2}(c)).
  One of these modifications preserves correctness
  (the first if $k_p \ge k_a$, otherwise the second).
  Both increase the depth of $k_p$ by one,
  but decrease the depth of $L$ or $R$ by one.
  By Lemma~\ref{lemma: equal},
  the side-weight of each node is $\beta_{\max}$,
  so the weight of $k_p$ is $\beta_{\max}$,
  and the weights of $L$ and $R$ are both at least $\beta_{\max}$.
  So the resulting tree $T'$ is also optimal.
  It has as many equality tests to max-weight keys as $T$,
  and one more inversion, 
  so (by the choice of $T$) the modified tree $T'$ cannot have an equality test at the root.
  So Condition (ii) holds.
  By the assumption that this is a full instance, this suffices to prove the theorem in Case 2.

  \mycase{3}
  \emph{Suppose that $T$ starts by doing equality tests to all the max-weight keys in $\Queries$ in any order.}
  Exchanging any two of the equality-test nodes to the max-weight keys preserves optimality.
  By repeated exchanges, we can modify $T$ to do the equality tests to the max-weight keys in any order.
  So Condition (iii) holds, proving the theorem in Case 3.

  \mycase{4} \emph{In the remaining case, 
    along every search path to every max-weight key all tests are equality tests,
    but $T$ doesn't start by doing equality tests to all the max-weight keys in some order.}
  Let $k_a$ be a max-weight key with maximum leaf depth.
  Let $N_a$ be the parent of $k_a$'s leaf.
  Every node on the path $P_a$ from the root to $N_a$ does an equality test,
  necessarily (by Lemma~\ref{lemma: equal}) to a max-weight key.
  Assume without loss of generality that the equality test at $N_a$ is to $k_a$
  (otherwise take $k_a$ to be the key that $N_a$ does an equality test to, which has the same leaf depth).
  Let $Q$ be the set of queries that don't have equality tests along the path $P_a$.
  By the condition for Case 4, $Q$ contains at least one max-weight key, say $k_b$.
  The search path for each query in $Q$ reaches the right child of $N_a$,
  but (by the choice of $k_a$) the search for $k_b$ must end at that child,
  so that child must be the leaf for $k_b$.
  Hence (for the tree to be correct) $Q$ can contain only $k_b$.
  It follows that $T$ is just a path of equality tests, one for every key except $k_b$.
  It follows that this is a successful-queries instance, and every key has weight $\beta_{\max}$.

  Suppose for contradiction that $\beta_{\max}>0$ and $n\ge 4$.
  Replacing $T$'s equality-test keys appropriately
  yields another optimal tree $T'$ that starts with equality tests to $k_1$, $k_2$, and $k_3$ in that order.
  Let $R$ be the right subtree of $\compnode = {k_3}$, as shown in Figure~\ref{fig: case 4}(a). 
  Let $T''$ be the tree shown in Figure~\ref{fig: case 4}(b).
  The weights of keys $k_1$ and $k_3$ are equal,
  and (as $n\ge 4$) the weight of $R$ is positive,
  implying that $T''$ is strictly cheaper than $T'$.
  This contradicts the optimality of $T'$.
  So $n\le3$ or $\beta_{\max} = 0$, proving the theorem.
  \begin{figure}\centering \small \renewcommand{\treeLabelSize}{\scriptsize}
    (a)
    \hspace*{-1em}
    \adjustbox{valign=t}{
      \begin{TREE}
        for tree = { s sep=2em }, 
        [{$q = k_1$}
          [{$k_1$}]
          [{$q = k_2$}
            [{$k_2$}]
            [{$q = k_3$}
              [{$k_3$}]
              [{$R$}, subtree]
              ]
            ]
          ]
      \end{TREE}
    }
    (b)
    \hspace*{-1em}
    \adjustbox{valign=t}{
      \begin{TREE}
        for tree = { s sep=2em }, 
        [{$q < k_3$}
          [{$q = k_1$}
            [{$k_1$}]
            [{$k_2$}]
            ]
          [{$q = k_3$}
            [{$k_3$}]
            [{$R$}, subtree]
            ]
          ]
      \end{TREE}
    }
    \caption{For Case 4 of the proof of Theorem~\ref{thm: root stronger}.}\label{fig: case 4}
  \end{figure}
\end{proof}

\section{Source code (Python)}\label{sec: code}

See Figure~\ref{fig: code}.


\begin{figure}[!p]
{\tt
  \lstinputlisting{9_code.tex}
}
\caption{Python code for the algorithm for the full variant.
  (Technical note: the memoization decorator uses a dictionary.
  To achieve a truly faithful implementation,
  it should be replaced by a three-dimensional array.)
}\label{fig: code}
\end{figure}


\end{document}